\newcommand{\bk}{\color{black}}
\definecolor{brightmaroon}{rgb}{0.76, 0.13, 0.28}
\definecolor{britishracinggreen}{rgb}{0.0, 0.26, 0.15}
\definecolor{cadmiumgreen}{rgb}{0.0, 0.42, 0.24}
\definecolor{darkmidnightblue}{rgb}{0.0, 0.2, 0.4}
\definecolor{darkpink}{rgb}{0.91, 0.33, 0.5}
\definecolor{teal}{rgb}{0.0, 0.5, 0.5}
\definecolor{burgundy}{rgb}{0.5, 0.0, 0.13}
\definecolor{azure}{rgb}{0.0, 0.5, 1.0}
\definecolor{alizarin}{rgb}{0.82, 0.1, 0.26}
\definecolor{carminered}{rgb}{1.0, 0.0, 0.22}
\newcommand{\bel}{\begin{equation} \label}
\newcommand{\ee}{\end{equation}}
\tikzset{arrow data/.style 2 args={%
      decoration={%
         markings,
         mark=at position #1 with \arrow{#2}},
         postaction=decorate}
      }%
\newcounter{hypo}
\definecolor{gr}{rgb}  {0.,  0.69,  0.23 }
\definecolor{bl}{rgb}  {0.,  0.5,  1. }
\definecolor{mg}{rgb}  {0.85, 0.,  0.85}
\definecolor{or}{rgb}  {0.9, 0.5,  0.}
\definecolor{webred}{rgb}{0.75,0,0}
\definecolor{webgreen}{rgb}{0,0.75,0}
\newcommand{\Z}{\mathbb{Z}}
\newcommand{\C}{\mathbb{C}}
\DeclareRobustCommand{\rchi}{{\mathpalette\irchi\relax}}
\newcommand{\irchi}[2]{\raisebox{\depth}{$#1\chi$}} % inner command, used by \rchi
\newcommand{\boundellipse}[3]% center, xdim, ydim
{(#1) ellipse (#2 and #3)
}
\newtheorem{theorem}{Theorem}[section]
\newtheorem{lemma}[theorem]{Lemma}
\newtheorem{prop}[theorem]{Proposition}
\newtheorem{definition}[theorem]{Definition}
\newtheorem{remark}[theorem]{Remark}
\newtheorem{corollary}[theorem]{Corollary}
\numberwithin{equation}{section}
\newtheorem{df}{Definition}[section]
\newtheorem{lem}{Lemma}[section]
\newtheorem{As}{Assumption}[section]
\title[Clusters of Resonances near spectral thresholds]{Clusters of resonances for a non-selfadjoint multichannel discrete Schr\"odinger operator}
\author[M. Assal, O. Bourget, P. Miranda and D. Sambou]{Marouane Assal$^{1, \ast}\footnote{$^\ast$ Corresponding author. E-mail: marouane.assal@usach.cl}$, Olivier Bourget$^2$, Pablo Miranda$^1$ and Diomba Sambou$^3$}
\begin{document}

\date{\today}

\maketitle

\begin{quote}
%\emph{
\begin{itemize}
\item[$^1$] Departamento de Matem\'atica y Ciencia de la Computati\'on, \\
Universidad de Santiago de Chile, Las sophoras 173, Santiago, Chile.

\smallskip

\emph{E-mails: marouane.assal@usach.cl, pablo.miranda.r@usach.cl}

\medskip

\item[$^2$] Facultad de Matem\'aticas, Pontificia Universidad Cat\'olica de Chile,\\
Av. Vicu\~na Mackenna 4860, Santiago, Chile.

\smallskip

\emph{E-mail: bourget@uc.cl}

\medskip

\item[$^3$] Institut Denis Poisson, Universit\'e d'Orl\'eans, UMR CNRS 7013, \\
45067 Orl\'eans cedex 2, France.

\smallskip

\emph{E-mail: diomba.sambou@univ-orleans.fr}
\end{itemize}
%}
\end{quote}

\begin{abstract} 
We study the distribution of resonances for discrete Hamiltonians of the form $H_0+V$ near the thresholds of the spectrum of
$H_0$. Here, the unperturbed operator $H_0$ is a multichannel Laplace type operator on $\ell^2(\mathbb Z; \mathbb C^N) \cong \ell^2(\mathbb Z)\otimes \mathbb C^N$ 
and $V$ is a non-selfadjoint compact perturbation. We  
compute the exact number of resonances and give a precise description on their location in clusters around some special points in the complex plane.

\vspace{0.2cm}

\noindent
{\bf Mathematics subject classification 2020:} 47A10, 81Q10, 81U24. 
 
\end{abstract}

\bk 

\section{Introduction}

The resonance phenomena in quantum mechanics has been mathematically tackled from various perspectives. For instance,  resonances of a stationary quantum system can be defined as the poles of a suitable meromorphic extension of either the Green function, the resolvent of the Hamiltonian or the scattering matrix. The imaginary part of such poles is sometimes interpreted as the inverse of the lifetime of some associated quasi-eigenstate. A related idea is to identify the resonances of a quantum Hamiltonian with the discrete eigenvalues of some non-selfadjoint operator obtained from the original one by the methods of spectral deformations.
Another point of view consists in defining the resonances dynamically, i.e., in terms of quasi-exponential decay for the time evolution of the system. This property is somewhat encoded in the concept of sojourn time.  The equivalence between these different perspectives and formalisms is also an issue. 

Significant progress has been made over the last thirty years in studying the existence and asymptotic behavior of resonances, particularly in continuous configuration spaces.  This has been achieved thanks to the development of many mathematical approaches such as scattering methods, spectral and variational techniques, semiclassical and microlocal analysis (many references to this vast literature can be found in the monographs \cite{Helffer1986,hislop2012introduction,MR3969938}).

On the other side, the qualitative spectral properties of the discrete Laplace operator and some selfadjoint generalizations exhibiting dispersive properties, have been extensively investigated. We primarily refer to \cite{de1999spectral, tadano2019long} for the multidimensional lattice case ${\mathbb Z}^d$, \cite{georgescu2005isometries} and references therein for trees, and \cite{ando2016spectral, parra2018spectral} for periodic graphs and perturbed graphs, respectively.  The role of the thresholds of the discrete Laplace operator are specifically studied in \cite{Ito2013ACC,  MR3959724}.  We also refer the reader to \cite{sahbani2016spectral} and references therein for studies concerning Jacobi matrices and block Jacobi matrices.

The study of resonances for quantum Hamiltonians on discrete structures has been mainly performed on quasi 1D models. However, in these approaches, perturbations are assumed to be diagonal and essentially compactly supported.  One may quote \cite{BST} where results on the distribution of resonances of compact perturbations of the 1D discrete Laplace operator were obtained. 
This suggests that a more systematic analysis of resonances in the spirit of  
\cite{BoBrRa14_02, BonBruRai14} should be performed in this context.  In what follows, we focus on some  generalizations of the 1D discrete Laplace operator and study the distribution of resonances that appear in the neighborhoods of the thresholds, in perturbative regimes.

The asymptotic behavior  of resonances near thresholds  have been studied in an abstract setting in \cite{GriKlo96}. However, this study does not include the models of the present work.  On the other side, in some continuous waveguides models, the singularities at the thresholds are similar in structure to the ones that appear in our case. In this context, related results to ours were  obtained by one of the authors and his collaborators in   \cite{BruMirPof18, BruMirPaPo20}.  In this work,  using a  modified  approach based on elementary perturbation theory,  the  conclusions that we obtain are more general and sharper. In fact, firstly we are able to consider non-selfadjoint perturbations.  Secondly,  we provide the exact   number of resonances near each threshold, contrary to \cite{BruMirPof18, BruMirPaPo20} where only an upper bound was given, and we show that they are distributed in small clusters whose radii depend on a perturbation parameter.   

It is worth mentioning that our method does not depend on the specific structure of the Hamiltonian but rather on the nature of the singularity of the resolvent at the threshold. To be more specific, our results  can be extended to Hamiltonians whose resolvent have a singularity of the form $z^{-\frac12}$.

Consider the operator $H_0$ acting on the Hilbert space $\mathcal{H}=\ell^2(\mathbb Z)\otimes \mathbb C^N$, $N\geq1$,  defined by
 \begin{equation}\label{Model}
H_0:= \Delta \otimes I_{N} + I_{\ell^2(\mathbb Z)} \otimes M,
\end{equation}
where $\Delta$ is the positive 1D discrete Laplacian defined on $\ell^2(\mathbb Z)$ by
\begin{equation*}\label{DL}
(\Delta u)(n) := 2u(n)-u(n+1)-u(n-1), \;\;\; u\in \ell^2(\mathbb Z),
\end{equation*}
and $M$ is a $N\times N$ diagonalizable matrix. Here $I_{N}$ and $I_{\ell^2(\mathbb Z)}$ denote the identity operators on $\mathbb C^N$ and $\ell^2(\mathbb Z)$, respectively. It is well known  that the operator $\Delta$ is bounded, selfadjoint in $\ell^2(\mathbb Z)$ and its spectrum is absolutely continuous given by $\sigma(\Delta) = \sigma_{{\rm ac}}(\Delta) = [0,4]$. 
Then, the spectrum of $H_0$ is absolutely continuous and it has the following band structure in the complex plane 
$$
\sigma(H_0) = \sigma_{ac}(H_0) =\bigcup_{q=1}^N [\lambda_q, \lambda_q+4],
$$
where $\{\lambda_q\}_{q=1}^N$ are the eigenvalues of $M$.   
The endpoints $\{\lambda_q, \lambda_q+ 4\}_{q=1}^N$ are the thresholds in $\sigma(H_0)$. Let $d\in \{1,...,N\}$ be the number of distinct eigenvalues of $M$ and we set  {$\mathcal{T} =\{\lambda_q\}_{q=1}^d$}. 
For $q\in \{1,...,d\}$, denote by $\nu_q$ the dimension of ${\rm Ker}(M-\lambda_q)$, and let us denote by $\pi_q$ the projection onto ${\rm Ker}(M - \lambda_q)$ defined by 
$$
\pi_q:=\frac{1}{2\pi i}\oint_{\vert z-\lambda_q\vert=\varepsilon}(z- M)^{-1}dz, 
$$
for any $ 0<\varepsilon <{\rm min}|\lambda_q-\lambda_p|, $   $ p\neq q$.

Given a threshold $\lambda_q\in \mathcal{T}$, one introduces the parametrization
\begin{equation}\label{New variable}
k\mapsto z_q(k)= 
\lambda_q +k^2, 
\end{equation}
where $k$ is a complex variable in  %describing 
a  neighborhood of $0$. 

Let $V$ be a bounded operator in $\mathcal H$ and introduce the perturbed operator 
$$
H_V:=H_0+V.
$$

Let us  denote by $\{\delta_n\}_{n\in \mathbb Z}$ the canonical basis of $\ell^2(\mathbb Z)$ and by $\{e_j\}_{j=1}^N$ the canonical basis of $\C^N$. We will suppose the following

\begin{As}\label{H1}  There exist constants $\rho, C>0$ such that 
$$
\Vert V(n,m)\Vert_{\mathcal{B}(\C^N)}\leq C e^{-\rho(\vert n\vert+ \vert m\vert)}, \quad \forall (n,m)\in \mathbb Z^2,
$$
where $(V(n,m))_{(n,m)\in \mathbb Z^2}$ is the matrix of $V$ in the basis $(\delta_n \otimes e_j)_{(n,j)\in \mathbb Z\times \{1,...,N\}}$.\end{As}

In order to define the resonances of $H_V$  near the thresholds, let us introduce some notations. For  a separable Hilbert space $\mathcal{K}$, we  denote by 
$\mathcal{B}(\mathcal{K})$ the algebra of bounded linear operators acting on $\mathcal{K}$.  
For $s>0$, let $W_s$ be the multiplication operator by the function $\mathbb Z \ni n\mapsto \Vert e^{-\frac{s}{2}\vert \cdot \vert}\Vert_{\ell^2(\mathbb Z)} e^{\frac{s}{2}\vert n\vert}$ acting on $e^{-\frac{s}{2} \vert \cdot \vert} \ell^2(\mathbb Z)$ with values in $\ell^2(\mathbb Z)$. $W_{-s }$ stands for the multiplication operator by the function $ \mathbb Z \ni n\mapsto \|e^{-\frac{s}{2}\vert \cdot \vert}\|^{-1}_{\ell^2(\mathbb Z)} e^{-\frac{s}{2}\vert n\vert}$ acting on $\ell^2(\mathbb Z)$ with values in $\ell^2(\mathbb Z)$. We set 
$$
\bm{W}_{\pm s}:=W_{\pm s} \otimes I_{N}.$$  
For $\varepsilon>0$ and $z_0\in \mathbb C$, we set $D_{\varepsilon}(z_0):=\{z\in \mathbb C; \vert z - z_0\vert< \varepsilon\}$ and $D^*_{\varepsilon}(0):= D_{\varepsilon}(z_0)\setminus \{z_0\}$. We also define $\mathbb C_1:=\{z\in\mathbb C: {\rm Re}\,z>0, {\rm Im}\,z>0\}$.

\begin{prop}\label{MERE}  Let $\lambda_q\in \mathcal{T}$. Under Assumption \ref{H1}, there exists $ \varepsilon_0>0$ 
such that %for all $|\omega|$ sufficiently small, 
the operator-valued function 
$$
D^*_{\varepsilon_0}(0) \cap \mathbb C_1 \ni k\mapsto \bm{W}_{- \rho} (H_V - z_q(k))^{-1} \bm{W}_{- \rho} 
$$
admits a meromorphic extension to $D_{\varepsilon_0}(0)$, with values in $\mathcal{B}(\mathcal{H})$. We denote by 
$\mathcal{R}^{(q)}( k )$ this extension. \end{prop}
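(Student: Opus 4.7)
The plan is to establish the extension of $\bm{W}_{-\rho}(H_\omega - z_q(k))^{-1}\bm{W}_{-\rho}$ in two stages: first analyze the free sandwiched resolvent $R_0(k) := \bm{W}_{-\rho}(H_0 - z_q(k))^{-1}\bm{W}_{-\rho}$ via spectral decomposition of $M$ and the explicit structure of the discrete Laplacian resolvent, then transfer the result to $H_\omega$ via a Neumann series argument valid for small $|\omega|$.

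For the free resolvent, I would exploit the diagonalizability of $M$. Writing $M = \sum_p \lambda_p \pi_p$ gives the orthogonal channel decomposition
\[
(H_0 - z_q(k))^{-1} = \sum_p \bigl(\Delta - (\lambda_q - \lambda_p + k^2)\bigr)^{-1} \otimes \pi_p,
\]
reducing the analysis to that of $(\Delta - w)^{-1}$ near $w = 0$ (via the channel $\lambda_p = \lambda_q$) and, in the degenerate case, near $w = 4$ (via $\lambda_p = \lambda_q - 4$). Channels with $\lambda_p \notin \{\lambda_q, \lambda_q - 4\}$ contribute terms that are holomorphic in $k$ near $0$; when $\lambda_q - \lambda_p \in (0,4)$ the exponential weights $W_{-\rho}$ provide the decay needed to continue the resolvent analytically across the real cut. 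For the threshold channels I would use the explicit kernel of the one-dimensional discrete Laplacian resolvent, $G(n, w) = e^{i\theta(w)|n|}/(-2i\sin\theta(w))$ with $\cos\theta(w) = 1 - w/2$ and the branch chosen so that $\mathrm{Im}\,\theta(w) > 0$ for $\mathrm{Im}\,w > 0$. After the substitution $w = k^2$, the function $\theta(k^2)$ is analytic near $k = 0$, and $G(n, k^2)$ is a meromorphic operator-valued function of $k$ with a rank-one singular contribution of order $k^{-1}$ at $k = 0$ (the ``threshold resonance'' of $\Delta$). Sandwiching with $W_{-\rho}$ preserves this structure and yields an extension of $R_0(k)$ to $D_{\varepsilon_0}(0)$.

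For the perturbed operator, I would apply the second resolvent identity
\[
(H_\omega - z)^{-1} = (H_0 - z)^{-1}\bigl[I + \omega V (H_0 - z)^{-1}\bigr]^{-1},
\]
sandwich with $\bm{W}_{-\rho}$, and insert intermediate weights $\bm{W}_{\pm\rho}$ to factor $V$. Assumption \ref{H1} provides the requisite boundedness: in case (A), the exponential decay $\|V(n, m)\|_\mathfrak{G} \le C e^{-\rho(|n|+|m|)}$ makes $\bm{W}_\rho V \bm{W}_\rho$ bounded on $\mathcal{H}$; in case (B), the factorization $V = (\mathbf{1} \otimes K^\ast) U (\mathbf{1} \otimes K)$ with $K \in \mathfrak{S}_p(\mathfrak{G})$ supplies the missing compactness in the $\mathfrak{G}$-direction. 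For $|\omega|$ small enough (depending on uniform bounds for $R_0(k)$ on compact subsets of $D_{\varepsilon_0}(0)$), the operator $I + \omega \bm{W}_\rho V \bm{W}_\rho \cdot R_0(k)$ is invertible by Neumann series, yielding the required extension of $\bm{W}_{-\rho}(H_\omega - z_q(k))^{-1}\bm{W}_{-\rho}$.

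The main obstacle I anticipate is the infinite-dimensional channel $\pi_0$ in case (B): there $\ker M$ is infinite dimensional, so the rank-one singularity of $W_{-\rho}(\Delta - k^2)^{-1}W_{-\rho}$ tensored with $\pi_0$ formally contributes an infinite-rank singular term. The compactness of $K$ in the factorization of $V$ is precisely what regularizes this contribution: in the Neumann expansion, the free resolvent is always composed with factors containing $1 \otimes K$ or $1 \otimes K^\ast$, which act on the $\mathfrak{G}$-component and restore Schatten-class (hence bounded) behavior. Tracking this regularization through the Neumann series and verifying its convergence uniformly in $k \in D_{\varepsilon_0}(0)$ is the technical heart of the proof, and is what determines the admissible range of $|\omega|$.
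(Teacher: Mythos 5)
Your analysis of the free weighted resolvent --- the channel decomposition via $M = \sum_p \lambda_p \pi_p$, the reduction to the one-dimensional discrete Laplacian resolvent kernel near the threshold channels, and the observation that the compact factor $K$ is what tames the infinite-dimensional $\pi_0$-channel in case \textbf{(B)} --- tracks the paper's Lemmas~\ref{LBST} and \ref{PLHE} closely and is essentially correct.

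The gap is in the passage from $H_{0,\bullet}$ to $H_{\omega,\bullet}$. You propose to invert $I + \omega\bm{W}_\rho V\bm{W}_\rho\,R_0(k)$ by Neumann series, citing ``uniform bounds for $R_0(k)$ on compact subsets of $D_{\varepsilon_0}(0)$.'' But $R_0(k) = \mathcal{R}^{(q)}_{0,\bullet}(k)$ has a simple pole at $k=0$ --- Lemma~\ref{PLHE} gives $\mathcal{R}^{(q)}_{0,\bullet}(k) = (a_{-1}\otimes\pi_q)/k + \mathcal{G}(k)$ --- so $\|\mathcal{P}_{\omega,\bullet}(z_q(k))\| = \|\omega V_\rho\,\mathcal{R}^{(q)}_{0,\bullet}(k)\| \asymp |\omega|/|k|$ near $k=0$. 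For fixed $\omega$, the Neumann series converges only in the region $|k|\gtrsim|\omega|$; it diverges precisely where the resonances live (cf.\ Theorem~\ref{MTH1}, where the resonances satisfy $|k|\sim|\omega|$). Taken at face value your argument would yield an \emph{analytic} extension with no poles, contradicting Definition~\ref{MAINDEF}, which defines resonances as poles of $\mathcal{R}^{(q)}_{\omega,\bullet}$. The paper avoids this by proving that $k\mapsto\mathcal{P}_{\omega,\bullet}(z_q(k))$ extends to an analytic function on $D^*_{\varepsilon_0}(0)$ with values in $\mathfrak{S}_\infty(\mathcal{H})$ --- the factorization through $1\otimes K$ in case \textbf{(B)}, together with Remark~\ref{RTU}, is what supplies this compactness --- and then invokes the \emph{analytic Fredholm alternative}, not a Neumann series, to obtain a meromorphic extension of $(I + \mathcal{P}_{\omega,\bullet}(z_q(\cdot)))^{-1}$. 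Compactness plus Fredholm replaces small-norm invertibility and is the tool your argument is missing: it allows the extension to be meromorphic (with poles to be studied later) rather than requiring it to be analytic.
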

The proof of this proposition is postponed to section \ref{Prv:exten-mero}.  %The resonances of the operator $H_V$ near a threshold $\lambda_q \in \mathcal{T}$ are defined as follows:

\begin{definition}\label{MAINDEF}
The resonances of the operator $H_V$ near a threshold $\lambda_q\in \mathcal{T}$ are defined as the points $z_q(k)=\lambda_q +k^2$ such that $k\in D_{\varepsilon_0}(0)$ is a pole
%such that $k$ is a 
of the meromorphic extension $\mathcal{R}^{(q)}$ 
given by Proposition \ref{MERE}. The multiplicity of a resonance $z_q(k_0)$ is defined by
\begin{equation}\label{multip}
{\rm mult}(z_q(k_0)):={\rm rank}\oint_\gamma \mathcal{R}^{(q)}(k) dk,
\end{equation}
where $\gamma$ is a positively oriented circle centered on $k_0$,  that doesn't contain any other pole of $\mathcal{R}^{(q)}$. The set of resonances of $H_V$ will be denoted by ${\rm Res}(H_V)$.
\end{definition}

\section{Main result}\label{MAINR}

%In this section one formulates our main result on the existence and the asymptotic properties of the resonances of the operator $H$  near the spectral thresholds. 

Let $a_{-1}$ and $b_{-1}$ be the operators in $\ell^2(\mathbb Z)$ defined by
\begin{align}\label{29jan22}
(a_{-1} u)(n) &:= \sum_{m\in \mathbb Z} \frac{i}{2}W_{-\rho}(n) W_{-\rho}(m) u(m),\\
(b_{-1}u)(n) &:= \sum_{m\in \mathbb Z} \frac{(-1)^{n+m+1}}{2}  W_{-\rho}(n) W_{-\rho}(m) u(m).
\end{align}

%Let $a_{-1}$ be the operator in $\ell^2(\mathbb Z)$ with kernel $\frac{i}{2}W_{-\rho}(n) W_{-\rho}(m)$. 
For $q\neq p\in \{0,1,...,d\}$, define the projections in $\mathcal{H}$
\begin{equation}\label{PRJ}
\Pi_q := \frac{2}{i} a_{-1} \otimes \pi_q,\quad \Pi_{q,p} := \frac{2}{i} a_{-1} \otimes \pi_q-{2}b_{-1} \otimes \pi_p.
\end{equation}
%if $\lambda_q$ is  non-degenerate,  and if  $\lambda_q=\lambda_{p}+4$ is  degenerate, respectively.
Notice that ${\rm rank}\, \Pi_q = \nu_q$ and ${\rm rank}\, \Pi_{q,p} = \nu_q+\nu_p$. Introduce $E_q$ and $E_{q,p}$ as the operators defined in $\mathcal{H}$ by
\bel{defEq}  
{E_q}:=  \Pi_q  \bm{W}_{- \rho}V \bm{W}_{- \rho}\Pi_q\quad \text{and} \quad {E_{q,p}}:=  \Pi_{q,p}  \bm{W}_{- \rho}V \bm{W}_{- \rho}{\Pi_{q,p}}.
\ee 

%\subsection{ Distribution of the resonances} 

%Our first results consist on the existence, the number and the asymptotic dependence on $\omega$ of  the resonances of the operator $H_{\omega}$ near the thresholds $\mathcal{T}_A$, and the ones of $H_{\omega}$ near the thresholds $\mathcal{T}_B\setminus \{0,4\}$.

For $q\neq 0$, let $\{\alpha_1^{(q)}, \alpha_2^{(q)}, \cdots, \alpha_r^{(q)}, 1\leq r \leq \nu_q\}$ be the set of distinct eigenvalues of ${E_q}\vert_{{\rm Ran}\, \Pi_q}$, each $\alpha_j^{(q)}$ of 
multiplicity $m_{q,j}$. Analogously  let $\{\beta_1^{(q)}, \beta_2^{(q)}, \cdots, \beta_r^{(q)}, 1\leq r \leq \nu_q+\nu_p\}$ be the set of distinct eigenvalues of ${E_{q,p}}\vert_{{\rm Ran}\, \Pi_{q,p}}$, 
each $\beta_j^{(q)}$ of multiplicity $m_{q,p,j}$. Of course $\sum_{j=1}^r m_{q,j}= \nu_q$ and $\sum_{j=1}^r m_{q,p,j}= \nu_q+\nu_p$. 

The following theorem is our main result. It gives the existence, the exact number and the asymptotic distribution of resonances near the thresholds of perturbations of $H_0$ of the form 
$$
H_{\omega V}:=H_0+\omega V
$$
with $\omega\in \mathbb C$ small.
%We introduce this parameter as a way to control the norm of the perturbation with respect to various  operators that appear in the proof of the theorem. We remark that in the following theorem the result concerning the number of resonances  is not an asymptotic in the sense that is valid not only for $|\omega|\to 0$,  but for a fixed $\omega$ near zero. In other words, the result is valid as long as the perturbation is small.}

\begin{theorem}\label{MTH1}
Assume  Hypothesis  \ref{H1}. Fix $q\in\{1,...,d\}$ and suppose that for  any $p\in \{1,...,d\}$ we have  $\lambda_q \neq \lambda_{p}+4$. Then, there exist $\varepsilon_0, \delta_0>0$ such that for all $\vert \omega\vert<\delta_0$, 
\begin{equation}\label{multi}
\sum_{z_q(k) \in {\rm Res}(H_{\omega V}) \cap D_{\varepsilon_0}(\lambda_q)} {\rm mult}(z_q(k))= \nu_q.
\end{equation} 
Furthermore, for any ${\alpha}^{(q)}_j\in \sigma(E_q)$ there are exactly  $m_{q,j}$ resonances $z_q(k)=\lambda_q+k^2$,  counted with multiplicities,  satisfying  \begin{equation}\label{ME}
k(\omega) = - \frac{i}{2}{\alpha_j^{(q)}} \omega + \mathcal{O}(\vert \omega\vert^{1+\frac{1}{m_{q,j}}}).
\end{equation}

Suppose now that $\lambda_q = \lambda_{p}+4$ for some $p\neq q\in \{1,...,d\}$. Then equations \eqref{multi}, \eqref{ME} hold true with  $\nu_q$, $\alpha_j^{(q)}$ and $m_{q,j}$ replaced by 
$\nu_q+\nu_p$, $\beta_j^{(q)}$ and $m_{q,p,j}$, respectively.
\end{theorem}

\begin{remark}
Equality \eqref{multi} gives the existence and the exact number of resonances of the operator $H_{\omega V}$ near the threshold $\lambda_q$ for small $|\omega|$. More precisely, it states that for $|\omega|$ small enough there are exactly $\nu_q$ resonances of $H_{\omega V}$ in a small disc around $\lambda_q$. On the other hand, equation \eqref{ME} implies that these resonances are distributed  in clusters around the points $-\frac{i}{2}\alpha_j^{(q)} \omega$ (see figure \ref{Fig:Ex1}).
\end{remark}

%This result state that for any fixed threshold $\lambda_q\in \mathcal{T}$ and $\vert \omega\vert$ small enough, the  resonances $z_q(k)= \lambda_q +k(\omega)^2$ near $\lambda_q$ are distributed  in clusters around the points $-\frac{i}{2}\alpha_j^{(q)} \omega$ (or $-\beta_j^{(q)}\omega$),  $j\in \{1,...,r\}$. See Figure \ref{Fig:Ex1}. 
\begin{figure}[h]
\includegraphics[scale=.7]{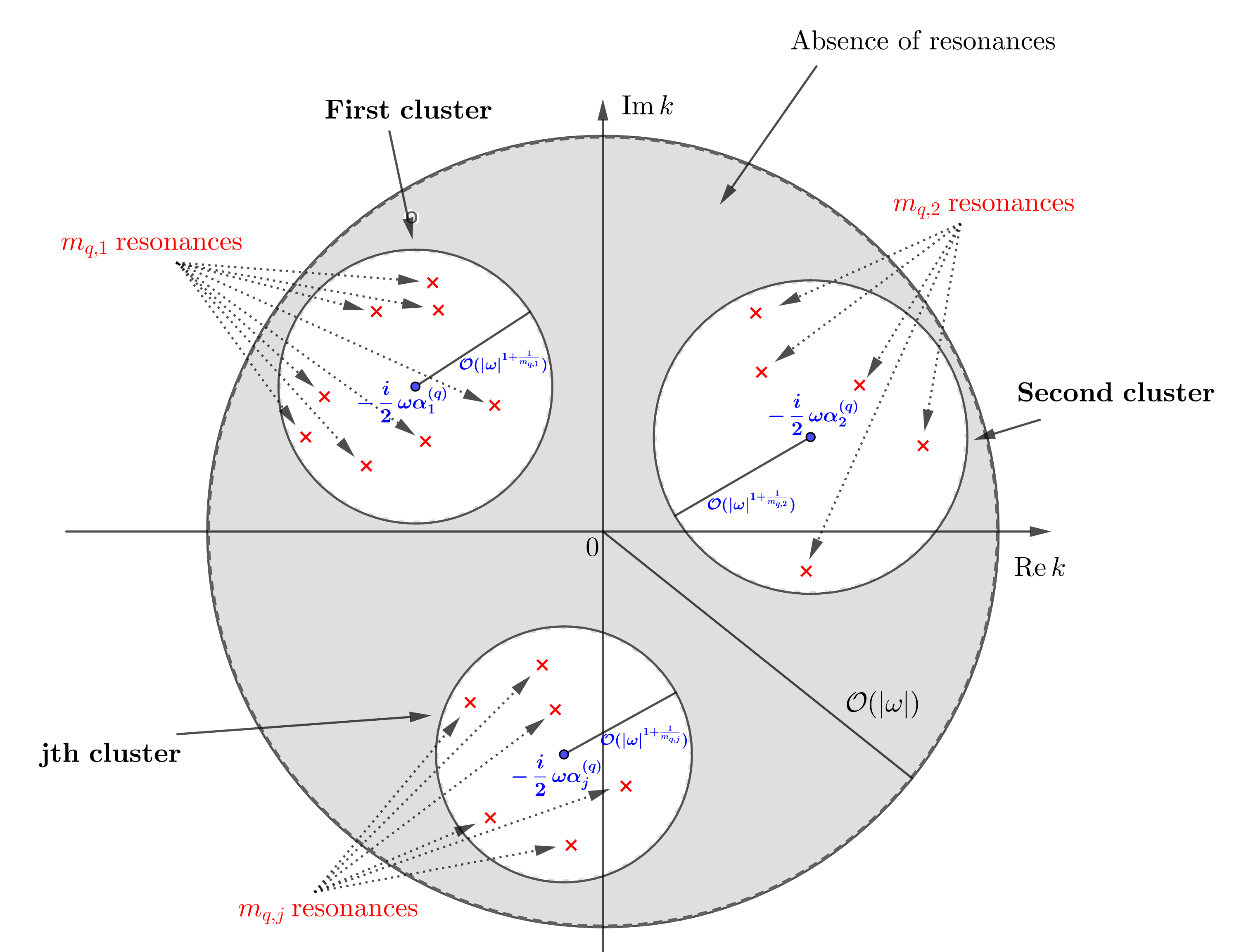}
\caption{{ \small{Resonances of $H_{\omega V}$ in variable $k$ near a  threshold $\lambda_q$} 
}}
\label{Fig:Ex1}
\end{figure}

\section{Proof of Proposition \ref{MERE}  }\label{Proofs of the results finite dimension}
\label{Prv:exten-mero}

The first step in our analysis is to study the behavior of the resolvent of the free Hamiltonian $H_{0}$ near the spectral thresholds.  Using the fact that $M$ {is diagonalizable}, 
for any $z\in \mathbb C \setminus \sigma(H_{0})$, one has 
\begin{equation}\label{RFREE}
(H_{0} - z)^{-1} = \sum_{j=1}^d (\Delta + \lambda_j -z)^{-1} \otimes \pi_j .
\end{equation}

Let us recall the following basic properties of the one-dimensional discrete Laplacian on $\ell^2(\mathbb Z)$. Let $\mathcal{F}: \ell^2(\mathbb Z)\to L^2(\mathbb T)$ be the unitary 
discrete Fourier transform defined by 
$$
(\mathcal{F}u)(\theta) = \frac{1}{2\pi}\sum_{n\in \mathbb Z} e^{-in\theta} u(n), \quad u\in \ell^2(\mathbb Z), \theta \in \mathbb T:=\mathbb R/2\pi \mathbb Z.
$$
The operator $\Delta$ is unitarily equivalent to the multiplication operator on $L^2(\mathbb T)$ by the function $f: \theta \mapsto 2-2\cos(\theta)$. More precisely, one has 
\begin{equation}\label{24nov21_b}
(\mathcal{F} (\Delta u))(\theta) = f(\theta) (\mathcal{F}u)(\theta), \quad u\in \ell^2(\mathbb Z), \theta\in \mathbb T.
\end{equation}
Hence, the operator $\Delta$ is selfadjoint in $\ell^2(\mathbb Z)$, its spectrum is absolutely continuous and coincides with the range of the function $f$, that is $\sigma(\Delta) = \sigma_{{\rm ac}}(\Delta) = [0,4]$. 

For any $z\in \mathbb C\setminus [0,4]$, the kernel of $(\Delta - z)^{-1}$ is given by (see for instance  \cite{KKK})
\begin{equation}\label{KER}
R_0(z;n,m) = \frac{e^{-i\theta(z)\vert n-m\vert}}{2 i \sin(\theta(z))}, \quad (n,m)\in \mathbb Z^2,
\end{equation}
where $\theta(z)$ is the unique solution to the equation $2  - 2\cos(\theta) = z$ lying in the region $\{\theta\in \mathbb C; -\pi \leq {\rm Re}\,\theta\leq \pi, {\rm Im}\,\theta<0\}$.

Using the previous  representation of the kernel, we have the following  result whose proof is elementary and omitted here.
\begin{lemma}\label{LBST} 
{Let $z_0\in [0,4)$ and $\rho>0$. There exists $\varepsilon_0>0 $ such that the operator-valued function 
$$
D_{\varepsilon_0}^*(0)\cap \mathbb C_1 \ni k\mapsto W_{-\rho} (\Delta - (z_0+ k^2))^{-1} W_{-\rho}
$$
admits an analytic extension to $D_{\varepsilon_0}^*(0)$ if $z_0=0$ and to $D_{\varepsilon_0}(0)$ if $z_0\in (0,4)$, with values in the Hilbert-Schmidt 
class operators in $\ell^2(\mathbb Z)$. This extension will be denoted by $R_0(z_0+k^2)$.}

\end{lemma}

In the next result, we show that the weighted resolvent of the free Hamiltonian $H_0$ extends meromorphically near any $\lambda_q\in \mathcal{T}$ and we precise the nature of its singularity at $\lambda_q$. 
\begin{lemma}\label{PLHE} 
Let $\lambda_q\in \mathcal{T}$. There exists $\varepsilon_0>0$ such that the operator-valued function 
\begin{equation}\label{MOVF}
D^*_{\varepsilon_0}(0)\cap \mathbb C_1 \ni k \mapsto \bm{W}_{-\rho} (H_{0}-z_q(k))^{-1} \bm{W}_{- \rho}
\end{equation}
admits an analytic extension to $D^*_{\varepsilon_0}(0)$, with values in $\mathfrak{S}_{\infty}(\mathcal{H})$, denoted $\mathcal{R}_{0}^{(q)}(k)$. Moreover:
\begin{itemize}
\item[(i)] If $\lambda_q $ is non-degenerate, i.e., $\lambda_q \neq \lambda_{p}+4$ for  any $p\in \{1,...,d\}$, then 
\begin{equation}
\mathcal{R}_{0}^{(q)}(k)- \frac{a_{-1} \otimes \pi_q}{k} \in {\rm Hol}\big(D_{\varepsilon_0}(0);\mathfrak{S}_{\infty}(\mathcal{H}) \big).
\end{equation}
\item[(ii)] If $\lambda_q $ is degenerate, i.e., $\lambda_q = \lambda_p +4 $ for some $p\in\{1,...,d\}$, then 
\begin{equation}
\mathcal{R}_{0}^{(q)}(k)- \frac{a_{-1} \otimes \pi_q + b_{-1}\otimes \pi_p}{k} \in {\rm Hol}\big(D_{\varepsilon_0}(0);\mathfrak{S}_{\infty}(\mathcal{H}) \big).
\end{equation}
%Here $\mathfrak{E}$ is equals to $\mathcal{S}_\infty(\ell^2(\mathbb Z)\otimes \mathfrak{G})$ in the case $\bullet = A$ and equals to $\mathcal{B}(\ell^2(\mathbb Z;\mathcal{G}))$ in the case $\bullet = B$.
\end{itemize}

\end{lemma}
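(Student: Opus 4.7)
The plan is to reduce the analysis to that of the scalar weighted resolvents already handled by Lemma~\ref{LBST}, using the spectral decomposition of $M$. Substituting $z = z_q(k) = \lambda_q + k^2$ in \eqref{RFREE} and conjugating by $\bm{W}_{-\rho}$ yields
\begin{equation}\label{PlanDec}
\bm{W}_{-\rho}(H_{0,\bullet} - z_q(k))^{-1}\bm{W}_{-\rho} = \sum_{j=1}^{d} T_j(k) \otimes \pi_j + T_0(k) \otimes \widehat{\pi}_{0,\bullet},
\end{equation}
where $T_j(k) := W_{-\rho}(\Delta - (\lambda_q - \lambda_j + k^2))^{-1} W_{-\rho}$ and $T_0(k) := W_{-\rho}(\Delta - (\lambda_q + k^2))^{-1} W_{-\rho}$. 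The desired meromorphic extension is then built term by term.

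I would classify each factor by the position of the shift $z_0 := \lambda_q - \lambda_j$ (or $z_0 := \lambda_q$ for the last term) relative to $\sigma(\Delta) = [0,4]$. If $z_0 \notin [0,4]$ the unweighted resolvent is analytic near $k = 0$. If $z_0 \in (0,4)$, Lemma~\ref{LBST} produces an analytic extension to $D_{\varepsilon_0}(0)$ in $\mathfrak{S}_2(\ell^2(\Z))$. If $z_0 = 0$, which happens exactly when $\lambda_j = \lambda_q$ (giving $j = q$ in the first sum, or, in case \textbf{B} with $\lambda_q = 0$, in the second tensor factor where $\pi_0 = \pi_q$), Lemma~\ref{LBST} gives an extension to the punctured disk $D^*_{\varepsilon_0}(0)$. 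The case $z_0 = 4$, corresponding precisely to the degenerate configuration $\lambda_q = \lambda_p + 4$ (with $p = j$ in the first sum, or $p = 0$ in case \textbf{B} when $\lambda_q = 4$), I would reduce to the $z_0 = 0$ case by intertwining with the unitary $L$ on $\ell^2(\Z)$ defined by $(L\phi)(n) = (-1)^n \phi(n)$, which commutes with $W_{-\rho}$ and satisfies $L\Delta L^{-1} = 4 - \Delta$; this yields $(\Delta - (4 + k^2))^{-1} = -L(\Delta + k^2)^{-1} L^{-1}$.

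The residue computation amounts to expanding the explicit kernel \eqref{KER}. For $z = k^2$ the equation $2 - 2\cos\theta = k^2$ under the branch condition ${\rm Im}\,\theta < 0$ yields $\theta(k^2) = -k + O(k^3)$ for $k \in \mathbb{C}_1$, hence
\begin{equation*}
R_0(k^2;n,m) = \frac{e^{ik|n-m|}}{-2ik}\bigl(1 + O(k^2)\bigr) = \frac{i}{2k} + O(1),
\end{equation*}
and multiplying by $W_{-\rho}(n) W_{-\rho}(m)$ reproduces the kernel $\frac{i}{2} W_{-\rho}(n) W_{-\rho}(m)$ of $a_{-1}$. For $z = 4 + k^2$, the expansion near $\theta = \pi$ gives $\theta(4+k^2) = \pi - ik + O(k^3)$, so $\sin\theta \sim ik$ and $e^{-i\theta|n-m|} \sim (-1)^{|n-m|}$; using $(-1)^{|n-m|} = (-1)^{n+m}$, the leading pole is $\frac{(-1)^{n+m+1}}{2k} W_{-\rho}(n) W_{-\rho}(m)$, which is precisely the kernel of $b_{-1}/k$.

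Inserting these expansions into \eqref{PlanDec} produces the pole $\frac{a_{-1}\otimes \pi_q}{k}$ in the non-degenerate case (i) and $\frac{a_{-1}\otimes \pi_q + b_{-1}\otimes \pi_p}{k}$ in the degenerate case (ii), with the remainder lying in ${\rm Hol}(D_{\varepsilon_0}(0); \mathcal{H}_\bullet)$. The choice of ideal $\mathcal{H}_\bullet$ is forced by the tensor structure: in case \textbf{A} every $\pi_j$ is finite-rank and each $T_j(k)$ is Hilbert--Schmidt, so the whole expression is compact; in case \textbf{B} the infinite-rank projection $\pi_0$ destroys compactness after tensoring, leaving only boundedness. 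The main obstacle is the branch bookkeeping in the degenerate case: one must select the $\mathbb{C}_1$-compatible branch of $\theta$ near $\pi$ so that the $L$-reduction produces the correct sign $(-1)^{n+m+1}$ in $b_{-1}$, and verify that the analyticity of the regular part is preserved through the intertwining.
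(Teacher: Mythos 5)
Your proposal is correct and follows essentially the same route as the paper: substitute the spectral decomposition \eqref{RFREE} of $(H_{0,\bullet}-z)^{-1}$, classify each tensor factor by the position of the shift $\lambda_q - \lambda_j$ relative to $[0,4]$, invoke Lemma~\ref{LBST} for the shifts inside $[0,4]$, and extract the residue from the explicit kernel \eqref{KER}. The choice of $\mathcal{H}_\bullet$ (compacts in case \textbf{(A)}, bounded operators in case \textbf{(B)}) is justified exactly as in the paper, by the finite vs.\ infinite rank of $\widehat{\pi}_{0,\bullet}$.

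The one place where you diverge, and where a little extra care is warranted, is the residue computation. You work from the leading asymptotics $\theta(k^2)=-k+O(k^3)$ and $\theta(4+k^2)=\pi-ik+O(k^3)$, which correctly identify the poles $\frac{i}{2k}$ and $\frac{(-1)^{n+m+1}}{2k}$ (and hence $a_{-1}$ and $b_{-1}$), but an $O(1)$ estimate is weaker than the holomorphy of the remainder that the lemma actually asserts. The paper avoids this by using the closed form $\theta(k^2)=-2\arcsin(k/2)$, $\sin\theta(k^2)=-\tfrac{k}{2}\sqrt{4-k^2}$, which expresses $R_0(k^2;n,m)-\tfrac{i}{2k}$ as an explicit function $r(k;n,m)$ whose holomorphy near $k=0$, uniformly once multiplied by $W_{-\rho}(n)W_{-\rho}(m)$, is immediate. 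You should upgrade your expansion to this closed form (or at least note that $\theta(z_q+k^2)$ is itself holomorphic in $k$ near $0$ and that the pole is simple, so the difference is automatically holomorphic). For part (ii) your $L$-intertwining $L\Delta L^{-1}=4-\Delta$ is a legitimate alternative to the direct expansion near $\theta=\pi$ --- indeed the paper itself mentions this reduction elsewhere and leaves (ii) as ``works similarly'' --- but, as you observe, one must track which of $\pm ik$ lands in the branch region $\{{\rm Im}\,\theta<0\}$; your direct expansion at $4+k^2$ sidesteps this and is arguably cleaner.
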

\begin{proof}
Let us start with the proof of the first part of the result on the analytic extension. Setting $z_j^{(q)}:= \lambda_q - \lambda_j$ 
it follows from \eqref{RFREE} that
 \begin{equation}\label{FRZ} \bm{W}_{-\rho} (H_{0}-z_q(k))^{-1} \bm{W}_{- \rho}  = \sum_{j=1}^d W_{-\rho} \big(\Delta - (z_j^{(q)} +k^2)\big)^{-1} W_{-\rho}  \otimes \pi_j.
\end{equation}
The above sum splits into the following two terms
\begin{align}\label{FRP}
\bm{W}_{-\rho} (H_{0}-z_q(k))^{-1} \bm{W}_{- \rho} = &\sum_{z_j^{(q)} \in [0,4]} W_{-\rho} \big(\Delta -( z_j^{(q)} + k^2) \big)^{-1} W_{-\rho}  \otimes \pi_j  \\
+ &\sum_{z_j^{(q)} \notin [0,4]} W_{-\rho} \big(\Delta -( z_j^{(q)} + k^2) \big)^{-1} W_{-\rho}  \otimes \pi_j. \nonumber
\end{align}
The second term in the RHS is clearly analytic with respect to $k$ in a small neighborhood of $0$. On the other hand, by Lemma \ref{LBST}, the first term in 
the RHS extends to an analytic function of $k$ in $D_{\varepsilon_0}^*(0)$ for $\varepsilon_0>0$ small enough. The extension is clearly in $\mathfrak{S}_{\infty}(\mathcal{H})$.

We turn now to the proof of the second part of the result. We only prove (i) since the proof of the other assertion works similarly. Recall from \eqref{FRZ} that we have 
\begin{align*}
\bm{W}_{-\rho} (H_{0}-z_q(k))^{-1} &\bm{W}_{- \rho} = \sum_{j\neq q} W_{-\rho} \big(\Delta -( z_j^{(q)} + k^2) \big)^{-1} W_{-\rho}  \otimes \pi_j  \\
&+ W_{-\rho} (\Delta - k^2)^{-1} W_{-\rho}  \otimes \pi_q.
\end{align*}
Since $\lambda_q \neq \lambda_j+4$ for all $j\neq q$, it follows that $z_j^{(q)} \notin \{0,4\}$ for any $j\neq q\in \{1,...,d\}$. Consequently, by Lemma \ref{LBST}, 
the first term in the RHS of the above equation extends analytically in a small neighborhood of $0$ with values in $\mathfrak{S}_\infty(\mathcal{H})$. On the other 
hand, the kernel of the operator $W_{-\rho} (\Delta - k^2)^{-1} W_{-\rho}$ is given by 
\begin{equation}\label{KAZ}
\frac{e^{-\frac{\rho}{2}\vert n\vert}}{\|e^{-\frac{\rho}{2}\vert \cdot \vert}\|_{\ell^2(\mathbb Z)}} \frac{e^{-\frac{\rho}{2}\vert m\vert}}{\|e^{-\frac{\rho}{2}\vert \cdot \vert}\|_{\ell^2(\mathbb Z)}}R_0(k;n,m) ,
\end{equation}
where $R_0(k^2;n,m)$ is defined by \eqref{KER}. One can write
\begin{equation}\label{KAZ1}
R_0(k^2;n,m) = \frac{i}{k\sqrt{4-k^2}}+ \frac{i( e^{i\vert n-m\vert 2 \arcsin \frac{k}{2}} -1 )}{k\sqrt{4-k^2}} = \frac{i}{2k}+ r(k;n,m),
\end{equation}
with
$$
r(k;n,m) := i \Big( \frac{1}{k\sqrt{4-k^2}} - \frac{1}{2k}\Big) + \frac{i(e^{i\vert n-m\vert 2 \arcsin \frac{k}{2}} -1)}{k\sqrt{4-k^2}}.
$$
One easily verifies that the function $r$ extends to a holomorphic function in a small neighborhood of $0$. Therefore, putting together \eqref{KAZ} and \eqref{KAZ1}, one obtains  
\begin{equation}
W_{-\rho} (\Delta - k^2)^{-1} W_{-\rho} \otimes \pi_q = \frac{a_{-1} \otimes \pi_q }{k} + \mathcal{A}(k) \otimes \pi_q,
\end{equation}
where $\mathcal{A}(k)$ acts on $\ell^2(\mathbb Z)$ with kernel $W_{-\rho}(n) r(k;n,m) W_{-\rho}(m)$. This ends the proof. 
\end{proof}

\subsection{Proof of Proposition \ref{MERE}}
The proof is a consequence of Lemma \ref{PLHE} and the analytic Fredholm extension Theorem. {From the resolvent identity 
$$
(H_V- z)^{-1} (I+V (H_{0} - z)^{-1}) = (H_{0} - z)^{-1},
$$
it follows that
\begin{equation}\label{pert_resolvent}
\bm{W}_{-\rho} (H_V - z_q(k))^{-1} \bm{W}_{-\rho} = \bm{W}_{-\rho} (H_{0} - z_q(k))^{-1} 
\bm{W}_{-\rho} \big( I+ \mathcal{P}(z_q(k)) \big)^{-1},
\end{equation} 
where
\begin{equation}\label{DPP}
\mathcal{P}(z) :=  \bm{W}_{\rho} V (H_{0} - z)^{-1} \bm{W}_{-\rho}.
\end{equation}
Lemma \ref{PLHE} implies that there exists $\varepsilon_0>0$ such that the operator-valued function 
$k\mapsto \mathcal{P}(z_q(k))$ defined by \eqref{DPP}  extends to an analytic function in $D^*_{\varepsilon_0}(0)$ with values in $\mathfrak{S}_{\infty}(\mathcal{H})$. Therefore the analytic Fredholm theorem ensures that 
$$
D^*_{\varepsilon_0}(0) \cap \mathbb C_1 \ni k\mapsto \big( I+ \mathcal{P}(z_q(k)) \big)^{-1}
$$
admits a meromorphic extension to $D^*_{\varepsilon_0}(0)$. We use the same notation for the extended operator. Hence, 
the operator-valued function $k\mapsto \bm{W}_{-\rho} (H_V- z_q(k))^{-1} \bm{W}_{-\rho}$ extends to a 
meromorphic function of $k\in D^*_{\varepsilon_0}(0)$. This ends the proof of Propositions \ref{MERE}.

%Moreover its poles coincides with the poles of $k\mapsto \big( I+ \mathcal{P}_{\omega}(z_q(k)) \big)^{-1}$ which are the characteristic values of $I+ \mathcal{P}_{\omega}(z_q(\cdot))$.
\begin{comment}
{\it b) Case $z_q(k) = \lambda_q + 4 - k^2$:} This case can be treated as above by using a simple relation between 
the two thresholds $\{0;4\}$. Indeed, let $L$, ${\bf L}$ be the operators defined by \eqref{eq:s5,26} and $V_{\bf L}$ by \eqref{eq:spr}. 
Since one has $L \Delta L^{-1} = -\Delta + 4$, then 
%$${\bf L} H_0 {\bf L}^{-1} = -\Delta \otimes I_{\mathfrak G} + Id \otimes V_0 + 4.$$
%This implies that 
$$
{\bf L} (H_{\omega} - z_q(k)) {\bf L}^{-1} 
%= -\Delta \otimes I_{\mathfrak G} + I_{\ell^2(\mathbb{Z})} \otimes A + \omega V_{\bf L} + 4 - z
=  -( \widetilde H_{0} - \omega V_{\bf L} - (k^2 - \lambda_q) )
$$
where
\begin{equation}
\widetilde H_{0} :=  \Delta \otimes I_{\mathfrak G} - I_{\ell^2(\mathbb{Z})} \otimes M.
\end{equation}
It follows that
\begin{equation}\label{eq:s5,28}
%\begin{split}
 {\bf L} \bm{W}_{-\rho} (H_{\omega} - z_q(k))^{-1} \bm{W}_{-\rho} {\bf L}^{-1} = 
 - \bm{W}_{-\rho} ( \widetilde H_{\omega} - (k^2 - \lambda_q) )^{-1} \bm{W}_{-\rho},
%\end{split}
\end{equation}
with
%\begin{equation}
$$
\widetilde H_{\omega} := \widetilde H_{0} - \omega V_{\bf L}.
$$
%\end{equation}
Now, the claim follows thanks to identity \eqref{eq:s5,28} and by reasoning as above with $H_{0}$ replaced by $\widetilde H_{0}$ and $V$
by $-V_{\bf L}$. This ends the proof of the proposition.
\end{comment}
\begin{comment}\begin{remark}\label{RAP}
The characteristic values of $I+ \mathcal{P}_{\omega}(z_q(\cdot))$ coincide with the ones of $I+ \omega J \vert V\vert^\frac12 (H_{0} - z_q(\cdot))^{-1} \vert V\vert^\frac12$.
\end{remark}
 \end{comment}
 
\section{Proofs Theorems \ref{MTH1}}

Let us start with the following preliminary results which precise the nature of the singularity of the meromorphic extension of the weighted resolvent at the thresholds. Recall that $a_{-1}$ is the operator in $\ell^2(\mathbb Z)$ with kernel $\frac{i}{2}W_{-\rho}(n) W_{-\rho}(m)$ and we introduce the operator $b_{-1}$ in  $\ell^2(\mathbb Z)$ with kernel $-\frac{(-1)^{n+m}}{2}  W_{-\rho}(n) W_{-\rho}(m)$.
\begin{prop}\label{PE1} Let $\lambda_q \in \mathcal{T}$ be such that $\lambda_q \neq \zeta_p$ for all $p\neq q$. There exists $\varepsilon_0>0$ small enough such that for $k\in D^*_{\varepsilon_0}(0)$ the following statements hold:
\begin{itemize}
\item[(i)] If $\lambda_q = \lambda_q$ then 
\begin{equation}
\bm{W}_{-\rho} (H_{0}-z_q(k))^{-1} \bm{W}_{- \rho} - \frac{a_{-1} \otimes \pi_q}{k} \in \text{Hol}\,\big(D_{\varepsilon_0}(0);\mathfrak{E} \big).
\end{equation}
\item[(ii)] If $\lambda_q = \lambda_q + 4$ then 
\begin{equation}
\bm{W}_{-\rho} (H_{0}-z_q(k))^{-1} \bm{W}_{- \rho} - \frac{b_{-1} \otimes \pi_q }{k} \in \text{Hol}\,\big(D_{\varepsilon_0}(0);\mathfrak{E} \big).
\end{equation}
Here $\mathfrak{E}$ is equals to $\mathcal{S}_\infty(\ell^2(\mathbb Z)\otimes \mathfrak{G})$ in the case $\bullet = A$ and equals to $\mathcal{B}(\ell^2(\mathbb Z;\mathcal{G}))$ in the case $\bullet = B$.
\end{itemize}

\end{prop}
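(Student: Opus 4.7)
The plan is to reduce both assertions to Lemma \ref{PLHE}(i). The first assertion follows immediately: the hypothesis $\lambda_q \neq \zeta_p$ for all $p \neq q$ is exactly the non-degeneracy condition of Definition \ref{DEFD}, and the singular part $\frac{a_{-1}\otimes \pi_q}{k}$ is then the content of Lemma \ref{PLHE}(i).

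For the second assertion, where the threshold under consideration is the right edge $\lambda_q+4$ of a band, I would reduce to a left-threshold problem via the selfadjoint unitary involution $\bm{L}:=L\otimes I_{\mathfrak G}$ on $\mathcal{H}$, with $(L\phi)(n):=(-1)^n\phi(n)$ on $\ell^2(\mathbb Z)$. A direct computation yields $L\Delta L^{-1}=4-\Delta$, whence
\[
\bm{L}(H_{0,\bullet}-z)^{-1}\bm{L}^{-1}=-\big(\widetilde H_{0,\bullet}-(4-z)\big)^{-1},\qquad \widetilde H_{0,\bullet}:=\Delta\otimes I_{\mathfrak G}-I_{\ell^2(\mathbb Z)}\otimes M.
\]
Setting $z=\lambda_q+4+k^2$, the shifted argument $4-z=-\lambda_q-k^2$ takes the left-threshold form $-\lambda_q+\tilde k^2$ with $\tilde k:=ik$. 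The point $-\lambda_q$ is a non-degenerate left threshold of $\widetilde H_{0,\bullet}$: indeed, the degeneracy condition $-\lambda_q=-\lambda_p+4$ translates into $\lambda_p=\lambda_q+4$, which is ruled out by the hypothesis $\lambda_q+4\neq \zeta_p$ for $p\neq q$.

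Since $\bm L$ commutes with the weights $\bm W_{-\rho}$, I can freely insert conjugations and apply Lemma \ref{PLHE}(i) to $\widetilde H_{0,\bullet}$ at its left threshold $-\lambda_q$, noting that $\operatorname{Ker}(-M+\lambda_q)=\operatorname{Ker}(M-\lambda_q)=\operatorname{Ran}\pi_q$. This yields
\[
\bm W_{-\rho}\big(\widetilde H_{0,\bullet}-(-\lambda_q+\tilde k^2)\big)^{-1}\bm W_{-\rho}-\frac{a_{-1}\otimes\pi_q}{\tilde k}\in\operatorname{Hol}\big(D_{\varepsilon_0}(0);\mathfrak E\big).
\]
Returning to the variable $k$ via $\tilde k=ik$ and conjugating back by $\bm L^{-1}$, the singular part becomes $-\bm L^{-1}\frac{a_{-1}\otimes\pi_q}{ik}\bm L=\frac{i\,\bm L^{-1}(a_{-1}\otimes\pi_q)\bm L}{k}$. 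A short calculation with the explicit kernels shows that conjugation by $L$ inserts a pointwise factor $(-1)^{n+m}$ into the kernel $\frac{i}{2}W_{-\rho}(n)W_{-\rho}(m)$ of $a_{-1}$; combining this with the prefactor $i$ produces the kernel $\frac{(-1)^{n+m+1}}{2}W_{-\rho}(n)W_{-\rho}(m)$ of $b_{-1}$, thereby yielding the claimed singular part $\frac{b_{-1}\otimes\pi_q}{k}$.

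The main obstacle is the careful bookkeeping of signs and the change of variable $\tilde k=ik$, which reflects the two-sheeted Riemann surface structure and swaps the sheets attached to the left threshold $-\lambda_q$ of $\widetilde H_{0,\bullet}$ and the right threshold $\lambda_q+4$ of $H_{0,\bullet}$. The functional-class statement (holomorphic remainder in $\mathfrak E$) is inherited from Lemma \ref{PLHE}: since $M$ and $-M$ share their spectral projections, the block decomposition---and in particular the presence of the non-compact $\pi_0$ block in case \textbf{(B)}---is left unchanged by the involution.
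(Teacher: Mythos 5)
Your part (i) is fine: the hypothesis $\lambda_q\neq\zeta_p$ for $p\neq q$ is the non-degeneracy condition of Definition \ref{DEFD}, and the singular part $\tfrac{a_{-1}\otimes\pi_q}{k}$ is exactly the content of Lemma \ref{PLHE}(i).

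For part (ii) your argument is correct, but you take a genuinely different route from the paper. The paper treats the right threshold $\lambda_q+4$ directly: it again splits the free resolvent via \eqref{RFREE}, isolates the block $W_{-\rho}(\Delta-(4+k^2))^{-1}W_{-\rho}\otimes\pi_q$, and reads off the leading $\tfrac{1}{k}$-singularity from the explicit kernel expansion at $z=4$, obtaining the kernel $\tfrac{-(-1)^{n+m}}{2}W_{-\rho}(n)W_{-\rho}(m)$ of $b_{-1}$. You instead conjugate by the parity involution $\bm L$, use $L\Delta L^{-1}=4-\Delta$ to convert $(H_{0,\bullet}-z)^{-1}$ into $-(\widetilde H_{0,\bullet}-(4-z))^{-1}$, and then invoke the left-threshold expansion at $-\lambda_q$ after the substitution $\tilde k=ik$. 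Your sign and kernel bookkeeping check out: $-\bm L^{-1}\tfrac{a_{-1}\otimes\pi_q}{ik}\bm L = \tfrac{(iL^{-1}a_{-1}L)\otimes\pi_q}{k}$, and $iL^{-1}a_{-1}L$ has kernel $\tfrac{i^2}{2}(-1)^{n+m}W_{-\rho}(n)W_{-\rho}(m)=\tfrac{(-1)^{n+m+1}}{2}W_{-\rho}(n)W_{-\rho}(m)$, which is indeed $b_{-1}$. The non-degeneracy translation $-\lambda_q\neq -\lambda_p+4 \Leftrightarrow \lambda_q+4\neq\lambda_p$ is also correct. Your approach has the merit of reducing the right-threshold case to the already-proved left-threshold one at the cost of the sheet-swapping substitution $\tilde k=ik$, whereas the paper's direct computation avoids any change of variable but requires redoing the local kernel expansion near $z=4$. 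One small point worth making explicit: Lemma \ref{PLHE} is stated for $H_{0,\bullet}$ built from $M$, not for $\widetilde H_{0,\bullet}$ built from $-M$; you should note (as you implicitly do via $\mathrm{Ker}(-M+\lambda_q)=\mathrm{Ran}\,\pi_q$) that the proof of that lemma only uses the spectral decomposition of $M$, which is unchanged under $M\mapsto -M$, so the lemma applies verbatim to $\widetilde H_{0,\bullet}$.
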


\begin{proof}
Let us start with the case \textbf{(A)}. Suppose that $\lambda_q = \lambda_q$ and recall from \eqref{RFREE} that we have 
\begin{align*}
\bm{W}_{-\rho} (H_{0}-z_q(k))^{-1} &\bm{W}_{- \rho} = \sum_{j\neq q} W_{-\rho} \big(\Delta -( \lambda_q - \lambda_j + k^2) \big)^{-1} W_{-\rho}  \otimes \pi_j  \\
&+ W_{-\rho} \big(\Delta - k^2 \big)^{-1} W_{-\rho}  \otimes \pi_q.
\end{align*}
Since $\lambda_q \neq \lambda_j+4$ for all $j\neq q$ it follows that $\lambda_q - \lambda_j \notin \{0,4\}$ for any $j\neq q\in \{1,...,d\}$. Consequently, by Lemma \ref{LBST}, the first term in the RHS of the above equation extends analytically in a small neighborhood of $0$. On the other hand, taking the Laurent expansion of $k\mapsto W_{-\rho} \big(\Delta - k^2 \big)^{-1} W_{-\rho}  \otimes \pi_q$ near $k=0$ using the fact that the kernel of $\big(\Delta - k^2 \big)^{-1}$ is given by according to \eqref{KER} one gets 
$$
W_{-\rho} \big(\Delta - k^2 \big)^{-1} W_{-\rho}  \otimes \pi_q - \frac{1}{k} a_{-1} \otimes \pi_q \in \text{Hol}\,\big(D_{\varepsilon_0}(0); \mathcal{S}_\infty(\ell^2(\mathbb Z)\otimes \mathfrak{G}) \big),
$$
which proves the claim of (i). The proof of (ii) is similar except that in this case the kernel of $(\Delta - (4+k^2))^{-1}$ is given by $\frac{-(-1)^{n+m}}{2} W_{-\rho}(n)W_{-\rho}(m)$.

Consider now the case $\textbf{(B)}$ and suppose that $\lambda_q = \lambda_q$. From \eqref{RFREE} we have 
\begin{align*}
\bm{W}_{-\rho} (H_{0}-z_q(k))^{-1} &\bm{W}_{- \rho} = \sum_{j=1}^d W_{-\rho} \big(\Delta -( \lambda_q - \lambda_j + k^2) \big)^{-1} W_{-\rho}  \otimes \pi_j  \\
&+ W_{-\rho} \big(\Delta - (\lambda_q +k^2) \big)^{-1} W_{-\rho}  \otimes \pi_0.
\end{align*}
Assume first that $\lambda_q \neq 0$. It follows from Lemma \ref{LBST} that the second term in the RHS of the above equation extends analytically in a small neighborhood of $0$. The first term can be treated as above and then we get the claim. The case $\lambda_q = 0$ follows again from Lemma \ref{LBST} and the above analysis. 
\end{proof}

\begin{comment}
Recall that $\lambda_q \in \mathcal{N}$ is equivalent to 
$$
\lambda_q - \lambda_j \notin (0,4) , \;\;\; \forall j\neq q \in \{1,...,d\}.
$$
Assume first that $\lambda_q - \lambda_j \neq 4$ for all $q\neq j\in \{1,...,d\}$. Then, the second term in the RHS of \eqref{FRP} is clearly analytic as a function of $k$ in a small neighborhood of $0$, with values in $\mathcal{S}_{\infty}(\ell^2(\mathbb Z)\otimes \mathbb C^N)$, and the result follows from Lemma \ref{LBST}. Now, if there exists $q \neq p\in \{1,...,d\}$ such that $\lambda_q = \lambda_p+4$, then the second term in the RHS of \eqref{FRP} itself splits into the sum of the following two terms 
\begin{align*}
\sum_{q\neq j=1}^d \mathcal{W}_{-\rho} (\Delta &+ \lambda_j - \lambda_q - k^2)^{-1} \mathcal{W}_{-\rho}  \otimes \pi_j = \mathcal{W}_{-\rho} (\Delta - (4 + k^2))^{-1} \mathcal{W}_{-\rho}  \otimes \pi_p \\
&+ \sum_{p,q\neq j=1}^d \mathcal{W}_{-\rho} (\Delta + \lambda_j - \lambda_q - k^2)^{-1} \mathcal{W}_{-\rho}  \otimes \pi_j
\end{align*}
The result follows again from Lemma \ref{LBST}.
\end{comment}

\section{Proof of Theorem \ref{MTH1}}  
\subsection{Proof of of Theorem \ref{MTH1}} %Let $\lambda_q \in \mathcal{T}$ be a fixed threshold and assume that $\lambda_q$ is non-degenerate in the sense of Definition \ref{DEFD}. %Let $\Pi_q$ be the projection in $\mathcal{H}$ defined in \eqref{PRJ} and $\Pi_q^\perp := I_{\ell^2(\mathbb Z) \otimes \mathfrak{G}} - \Pi_q$. %We set $c_\rho := \Vert W_{-\rho} \Vert_{\ell^2(\mathbb Z)}^2.$

%For $j\in \{1,...,\nu_q\}$, we set
%$$
%\bm{\Psi}_j^{(q)} := \frac{W_{-\rho} \otimes \psi_j^{(q)}}{\Vert W_{-\rho} \otimes \psi_j^{(q)} \Vert_{\ell^2(\mathbb Z)\otimes \mathfrak G}}.
%$$
%Notice that $\bm{\Psi}_1^{(q)}, \bm{\Psi}_2^{(q)}, ..., \bm{\Psi}_{\nu_q}^{(q)}$ are linearly independent vectors in $\ell^2(\mathbb Z)\otimes \mathfrak{G}$. 

According to Lemma \ref{PLHE}, there exists $\varepsilon_0>0$ and an analytic function $\mathcal{G}$ in $D_{\varepsilon_0}(0)$ with values in $\mathfrak{S}_\infty(\mathcal{H})$ such that for all $k\in D_{\varepsilon_0}^*(0)$ we have 
\begin{equation}\label{TFW1}
\mathcal{R}_{0}^{(q)}(k) = \frac{a_{-1} \otimes \pi_q }{k}+ \mathcal{G}(k).
\end{equation}

%Setting $\mathcal{K}_q := \bm{W}_\rho V \bm{W}_\rho \Pi_q$ and $\widehat{\mathcal{G}}(k):= \bm{W}_{\rho} V \bm{W}_{\rho} \mathcal{G}(k)$ 
It follows from equation \eqref{pert_resolvent} that for all $k\in D_{\varepsilon_0}^*(0)$, 
\begin{equation}\label{CWZ}
\mathcal{R}_{\omega}^{(q)}(k) = \Big( \frac{a_{-1} \otimes \pi_q }{k} + \mathcal{G}(k) \Big) \left[ I+ \mathcal{P}_{\omega}(z_q(k)) \right]^{-1},
\end{equation}
where $\mathcal{R}_{\omega}^{(q)}(k)$ is the meromorphic extension of $k\mapsto \bm{W}_{- \rho} (H_{\omega V} - z_q(k))^{-1} \bm{W}_{- \rho} $ given by Proposition \ref{MERE} and $\mathcal{P}_{\omega}(z_q(k))$ is defined by \eqref{DPP}. More precisely, 
setting ${V}_\rho:=  \bm{W}_\rho V \bm{W}_\rho$, 
one has 
\begin{equation}\label{TFW2}
\left[ I+ \mathcal{P}_{\omega}(z_q(k)) \right]^{-1} = \big( I+ \omega {V}_\rho \mathcal{R}_{0}^{(q)}(k) \big)^{-1}.
\end{equation}
Since $\mathcal{G}$ is analytic near $0$, it follows that for $\vert \omega\vert$ small enough, the operator-valued function $I+ \omega {V}_\rho \mathcal{G}(k)$ is invertible. 
Using \eqref{TFW1}, one writes
\begin{equation}\label{TRZ}
\left[ I+ \mathcal{P}_{\omega}(z_q(k)) \right]^{-1} =   \Big( I+\frac{\omega}{k}\mathcal{L}_\omega(k) \Big)^{-1} \left( I + \omega  {V}_\rho \mathcal{G}(k)\right)^{-1},
\end{equation}
where $\mathcal{L}_\omega(k)$ is the operator in $\mathcal{H}$ defined by
$$
\mathcal{L}_\omega(k) :=   \left( I + \omega {V}_\rho \mathcal{G}(k)\right)^{-1} {V}_\rho (a_{-1} \otimes \pi_q).
$$
Putting together \eqref{CWZ} and \eqref{TRZ}, we obtain that for all $k\in D_{\varepsilon_0}^*(0)$
\begin{equation}\label{29nov21b}
\mathcal{R}_{\omega}^{(q)}(k) = \Big( \frac{a_{-1} \otimes \pi_q }{k} + \mathcal{G}(k) \Big)   \Big( I+\frac{\omega}{k}\mathcal{L}_\omega(k) \Big)^{-1} \left( I + \omega  {V}_\rho \mathcal{G}(k)\right)^{-1}.
\end{equation}
%Since $\mathcal{G}$ is analytic near $0$ it follows that for $\vert \omega\vert$ small enough, the operator-valued function $k\mapsto (I+ \omega \bm{V}_\rho \mathcal{G}(k))^{-1}$ is also analytic in a small neighborhood of $0$. 
Then, the poles of $k \mapsto\mathcal{R}_{\omega}^{(q)}(k)$ near $0$ coincide with those of the operator-valued function
$$
 k \mapsto J_\omega(k) : = \Big( \frac{a_{-1} \otimes \pi_q }{k} + \mathcal{G}(k) \Big) \Big( I+\frac{\omega}{k}\mathcal{L}_\omega(k) \Big)^{-1}. 
$$
We shall make use of the following elementary result whose proof is omitted.
\begin{lemma}\label{lem2}
Let $\mathcal{K}$ be a Hilbert space and consider two linear operators $A, \Pi : \mathcal{K} \to \mathcal{K}$ such that $\Pi^2= \Pi$ and $A\Pi = A$. Then, $I+A$ is invertible if and only if $\Pi(I + A) \Pi : {\rm Ran} \, \Pi \to {\rm Ran} \, \Pi$ is invertible, and in this case one has 
$$
(I+ A)^{-1} =  (I - \widetilde{\Pi} A \Pi) B^{-1}  + \widetilde{\Pi} ,
$$
where $\widetilde{\Pi}:= I- \Pi$ and $B^{-1} :=(\Pi (I+A)\Pi)^{-1} \oplus 0$ with respect to the decomposition $\mathcal{K} = {\rm Ran} \, \Pi \oplus {\rm Ran} \, \widetilde{\Pi}$.
%with $B:= (\Pi (I+ A)\Pi)_{\vert \Pi(\mathcal{H})$.
\end{lemma}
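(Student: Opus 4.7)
The proof will be a direct block-matrix computation based on the decomposition
$\mathcal{K} = \mathrm{Ran}\,\Pi \oplus \mathrm{Ran}\,\widetilde{\Pi}$, which holds since $\Pi^2=\Pi$ is an (algebraic, not necessarily orthogonal) projection. My first step is to observe that the hypothesis $A\Pi = A$ is equivalent to $A\widetilde{\Pi}=0$, i.e.\ $A$ annihilates $\mathrm{Ran}\,\widetilde{\Pi}$. Consequently, in block form with respect to this decomposition,
\[
A = \begin{pmatrix} \Pi A\Pi & 0 \\ \widetilde{\Pi} A \Pi & 0\end{pmatrix}, \qquad
I+A = \begin{pmatrix} \Pi(I+A)\Pi & 0 \\ \widetilde{\Pi} A \Pi & I\end{pmatrix},
\]
where in the lower-right corner $I$ denotes the identity on $\mathrm{Ran}\,\widetilde{\Pi}$ and $\Pi(I+A)\Pi$ is viewed as an operator on $\mathrm{Ran}\,\Pi$.

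Since $I+A$ is block lower triangular with identity in the lower-right block, standard linear algebra gives at once that $I+A$ is invertible on $\mathcal K$ if and only if the upper-left block $\Pi(I+A)\Pi\colon \mathrm{Ran}\,\Pi \to \mathrm{Ran}\,\Pi$ is invertible, which proves the equivalence. This is really the only substantive point; the rest is bookkeeping.

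To obtain the explicit formula, I will invert the block-lower-triangular operator directly. Setting $C:=\Pi(I+A)\Pi$ and using the Schur/triangular inversion formula,
\[
(I+A)^{-1} = \begin{pmatrix} C^{-1} & 0 \\ -\widetilde{\Pi}A\Pi\, C^{-1} & I\end{pmatrix}.
\]
Then I will verify that the claimed expression $(I-\widetilde{\Pi}A\Pi)B^{-1}+\widetilde{\Pi}$ reproduces this block matrix: with $B^{-1}=C^{-1}\oplus 0$, the product $(I-\widetilde{\Pi}A\Pi)B^{-1}$ is $C^{-1}$ in the $(1,1)$ block and $-\widetilde{\Pi}A\Pi\, C^{-1}$ in the $(2,1)$ block (the zero column of $B^{-1}$ kills all $(1,2),(2,2)$ contributions), and adding $\widetilde{\Pi}$ supplies the missing identity in the $(2,2)$ block, yielding exactly the matrix above. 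As a sanity check one can also multiply the candidate formula on the left and right by $I+A$ and verify both products equal the identity using only $\Pi^2=\Pi$, $A\Pi=A$, and $\widetilde{\Pi}\Pi=\Pi\widetilde{\Pi}=0$; this route is slightly more tedious but avoids explicit block notation for readers who prefer it.

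There is no genuine obstacle: the only mild subtlety to flag is that the decomposition $\mathcal{K}=\mathrm{Ran}\,\Pi\oplus\mathrm{Ran}\,\widetilde{\Pi}$ is algebraic and the block representations are with respect to this (generally non-orthogonal) splitting, so all manipulations must be justified purely algebraically from the idempotency of $\Pi$ rather than from any adjoint relation. Everything else is routine.
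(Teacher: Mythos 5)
Your proof is correct. The paper itself states this lemma and explicitly declares that the proof is omitted (``whose proof is omitted''), so there is no proof in the paper to compare against; your block-triangular argument is the natural and essentially canonical one for this kind of statement. Two small points worth flagging in your write-up, both of which you essentially handle correctly: first, the decomposition $\mathcal{K}=\mathrm{Ran}\,\Pi\oplus\mathrm{Ran}\,\widetilde{\Pi}$ used for the block representation is the algebraic direct sum $\mathrm{Ran}\,\Pi\oplus\mathrm{Ker}\,\Pi$, valid for any idempotent and not assumed orthogonal, which you do note; second, in the block for $I+A$ the identity in the lower-right corner is the identity on $\mathrm{Ran}\,\widetilde{\Pi}$, and it is precisely this built-in invertibility of the $(2,2)$ block that makes the triangular criterion give the ``if and only if'' cleanly, with no Schur complement needed. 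The verification that $(I-\widetilde{\Pi}A\Pi)B^{-1}+\widetilde{\Pi}$ reproduces the block inverse $\begin{pmatrix} C^{-1} & 0 \\ -\widetilde{\Pi}A\Pi\,C^{-1} & I\end{pmatrix}$ is correct as stated.
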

%\begin{proof}
%The proof reduces to the case $c=1$, by setting $\widehat{\Pi} := \frac{1}{c} \Pi$ so that $\widehat{\Pi}^2 = \widehat{\Pi}$ and $A \widehat{\Pi} = A$, which is a standard result.
%Since $A\Pi = A$ it follows that the operator $I+A$ on $\mathcal{H} = \Pi(\mathcal{H}) \oplus \Pi^\perp(\mathcal{H})$ can be represented in the form
%$$
%I+A =\begin{pmatrix}
%\Pi + \Pi A \Pi & 0 \\
%\Pi^\perp A& \Pi^\perp
%\end{pmatrix} = \Pi(I+ \Pi A)\Pi \oplus \Pi^\perp.
%$$
%\end{proof}

%Consequently, setting $\varepsilon = \min\{\varepsilon_0, \varepsilon_1\}$ there exists $\delta_0>0$ small enough such that for $\vert \omega \vert<\delta_0$ one has
%$$
%\#\{z_q(k) \in {\rm Res}(H_\omega), k \in D_{\varepsilon}^*(0)\} = \# \left\{k\in D_{\varepsilon}^*(0), k\, \text{is a pole of}\; \mathcal{L}_\omega(k) \right\}.
%$$
%The proof in the case $\lambda_q=\lambda_q+4$ is quite similar by considering $b_{-1}$ instead of $a_{-1}$. 

%For any $k\in D_{\varepsilon_0}^*(0)$, setting $\bm{\Phi}_{j,\omega}^{(q)}(k) := \frac{i}{2} \left( I + \omega \bm{V}_\rho \mathcal{G}(k)\right)^{-1} \bm{V}_\rho \bm{\Psi}_j^{(q)}$, one has
%$$
%\mathcal{L}_\omega(k) = \sum_{j=1}^{\nu_q} \vert  \bm{\Phi}_{j,\omega}^{(q)}(k) \rangle \langle \bm{\Psi}_j^{(q)} \vert.
%\langle \bm{\Psi}_j^{(q)}, \cdot \rangle_{\ell^2(\mathbb Z) \otimes \mathfrak G} \bm{\Phi}_{j,\omega}^{(q)}(k).
%$$
%where
%$$
 %\bm{\Phi}_{j,\omega}^{(q)}(k) := \frac{i}{2} \left[ I + \omega \bm{V}_\rho \mathcal{G}(k)\right]^{-1} \bm{V}_\rho \bm{\Psi}_j^{(q)}.
%$$
Let $\Pi_q$ be the projection on $\mathcal{H}$ defined by \eqref{PRJ}. Applying the above result with $A= \frac{\omega}{k} \mathcal{L}_\omega(k)$ and $\Pi = \Pi_q$, we get
$$
\Big( I+\frac{\omega}{k}\mathcal{L}_\omega(k) \Big)^{-1} = \Big(I - \frac{\omega}{k} \widetilde{\Pi}_q \mathcal{L}_\omega(k) \Pi_q  \Big) \biggl(  \left(\Pi_q \left(I+\frac{\omega}{k}\mathcal{L}_\omega(k) \right) \Pi_q\right)^{-1} \oplus 0 \biggr) + \widetilde{\Pi}_q.
$$
Here, $\widetilde{\Pi}_q:= I - \Pi_q$. Therefore, a straightforward computation yields 
$$
J_\omega(k) = \left( \frac{i}{2} \Pi_q - \omega\mathcal{G}(k) \Big( \frac{k}{\omega} - \widetilde{\Pi}_q \mathcal{L}_\omega(k) \Pi_q \Big) \right) \big( \left[\Pi_q \left(k +\omega\mathcal{L}_\omega(k) \right) \Pi_q\right]^{-1} \oplus 0 \big)
+ \mathcal{G}(k) \widetilde{\Pi}_q.
$$
Since $\left[\Pi_q \left(k +\omega\mathcal{L}_\omega(k) \right) \Pi_q\right]^{-1} : {\rm Ran} \, \Pi_q \to  {\rm Ran} \, \Pi_q $, it follows that $\left[\Pi_q \left(k +\omega\mathcal{L}_\omega(k) \right) \Pi_q\right]^{-1} \oplus 0$ is stable by $\Pi_q : \mathcal{H} \to \mathcal{H}$. Consequently, 
\begin{equation}\label{29jan22b}
J_\omega(k) = \left( \frac{i}{2}  - \omega\mathcal{G}(k) \Big( \frac{k}{\omega} - \widetilde{\Pi}_q \mathcal{L}_\omega(k) \Pi_q \Big) \right)  \big( \left[\Pi_q \left(k +\omega\mathcal{L}_\omega(k) \right) \Pi_q\right]^{-1} \oplus 0 \big)
+ \mathcal{G}(k) \widetilde\Pi_q.
\end{equation}
Using the analyticity of $\mathcal{G}$ and $\mathcal{L}_\omega$ near $0$, one sees that $\frac{i}{2}  - \omega\mathcal{G}(k) \big( \frac{k}{\omega} - \widetilde{\Pi}_q \mathcal{L}_\omega(k) \Pi_q \big)$ 
is invertible for $\vert k\vert$ and $\vert \omega \vert$ small enough. Therefore, we conclude that the poles of $J_\omega$ near $0$ are the same to those of the operator-valued function 
\begin{equation*}
k\mapsto \left(\Pi_q \left(k+\omega\mathcal{L}_\omega(k) \right) \Pi_q\right)^{-1} : {\rm Ran} \, \Pi_q \to {\rm Ran} \, \Pi_q.
\end{equation*}
Let $M_\omega(k)$ be the matrix of the operator $\Pi_q \left(k+\omega\mathcal{L}_\omega(k) \right) \Pi_q : {\rm Ran} \, \Pi_q \to {\rm Ran} \, \Pi_q$. We have 
\begin{align} \label{AGGN}
\Pi_q \left(k +\omega\mathcal{L}_\omega(k) \right) \Pi_q &= k \Pi_q + \omega  \Pi_q \left( I + \omega {V}_\rho \mathcal{G}(k)\right)^{-1} {V}_\rho (a_{-1}\otimes \pi_q)\Pi_q \nonumber \\
&= k \Pi_q + \frac{i}{2}\omega   \Pi_q{V}_\rho\Pi_q+\omega^2 \Pi_q S_\omega(k) \Pi_q,
\end{align}
%$$
%\textcolor{blue}{u_{i,j}^{(q)} := \langle {\bm \Psi^{(q)}_{i}},  \bm{V}_{\rho}  \bm{\Psi}_j^{(q)}\rangle_{\ell^2(\mathbb Z) \otimes \mathfrak{G}} = \frac{1}{\Vert W_{-\rho} \otimes \psi_i^{(q)} \Vert \Vert W_{-\rho} \otimes \psi_j^{(q)} \Vert } \langle (1_{\mathbb Z} \otimes \psi_i^{(q)}) , V (1_{\mathbb Z} \otimes \psi_j^{(q)}) \rangle_{\ell^2(\mathbb Z) \otimes \mathfrak{G}}, }
%$$ 
%and $r_{i,j}^{(q)}(k;\omega) := \langle {\bm \Psi^{(q)}_{i}}, 
where 
\bel{23}
k \mapsto S_\omega(k):= \frac{i}{2} \sum_{n\geq 1} (-1)^n \omega^{n-1} ( {V}_\rho \mathcal{G}(k))^n {V}_{\rho}
\ee
 is an  operator-valued function which is  analytic near $k=0$ for $\vert \omega \vert>0$ small enough and $\|S_\omega(k)\|=\mathcal{O}(1)$ uniformly w.r.t. $k$. 

The usual expansion formula for the determinant allows to write
$$
{\rm det}(M_\omega(k))=  \omega^{\nu_q} \biggl( \prod_{j=1}^{r} \Big( \frac{k}{\omega}+ \boldsymbol{\alpha}_j^{(q)} \Big)^{m_{q,j}} + \omega s_\omega(k) \biggr),
$$
with $\boldsymbol{\alpha}_j^{(q)} := \frac{i}{2} \alpha_j^{(q)}$, where $\{{\alpha_j^{(q)}}\}_{j=1}^r$ are the distinct eigenvalues of $E_q=  \Pi_q{V}_\rho\Pi_q: {\rm Ran}\, \Pi_q \to {\rm Ran}\, \Pi_q$ and $s_\omega$ is an analytic scalar-valued function satisfying 
\begin{equation}\label{UTAE}
\vert s_\omega(k)\vert \leq C_0,
\end{equation} 
for some constant $C_0>0$ independent of $k$ and $\omega$. We are therefore led to study the roots of the equation 
\bel{10}
\prod_{j=1}^{r} \Big( \frac{k}{\omega}+ \boldsymbol{\alpha}_j^{(q)} \Big)^{m_{q,j}} + \omega s_\omega(k) = 0.
\ee
On the one hand, by a simple contradiction argument one shows that all the roots of the above equation satisfy \eqref{ME}. On the other hand, let $j_0\in \{1,...,r\}$ and let $C>0$ be a constant independent of $k$ and $\omega$. We set 
\begin{center}
$\delta_{j_0,C}:= C\vert \omega\vert^{1+\frac{1}{m_{q,j_0}}}.$ 
\end{center}
There exists a constant $C'>0$ such that for any $k\in \partial D_{\delta_{j_0,C}}(-\boldsymbol{\alpha}_{j_0}\omega)$, one has 
$$
 \prod_{j=1, j\neq j_0}^r \Big\vert\frac{k} \omega+\boldsymbol{\alpha}_j^{(q)}\Big\vert^{m_{q,j}} \geq C'.
$$
Consequently, 
$$
 \prod_{j=1}^{r} \Big\vert \frac{k}{\omega}+ \boldsymbol{\alpha}_j^{(q)} \Big\vert^{m_{q,j}}\geq C C'\vert \omega \vert> \vert \omega s_{\omega}(k)\vert, \quad \forall k \in \partial D_{\delta_{j_0,C}}(-\boldsymbol{\alpha}_{j_0}\omega),
$$
where $C>0$ is chosen such that $CC'>C_0$, with $C_0$ given by \eqref{UTAE}.

Since both terms in \eqref{10} are analytic functions of $k$ near $k=0$, it follows by Rouch\'e Theorem that for $\vert \omega\vert$ small, ${\rm det}(M_\omega(k))$ admits exactly $m_{q,j_0}$ zeros in $D_{\delta_{j_0,C}}(- \boldsymbol{\alpha}_{j_0}\omega)$, counting multiplicities. This ends the proof of statement \eqref{ME}.

\begin{comment}
\begin{align*}{\rm det}(M^{(q)}(k;\omega))&={\rm det}(kI_{\nu_q}+\omega (u^{(q)}_{i,j})_{1\leq i,j\leq \nu_q} +r(k;\omega))\\&=\omega^{\nu_q}\, {\rm det}\left(\frac{k}\omega+u_{i,j}+O_k(\omega)\right)\\
&=\omega^{\nu_q}\left({\rm det}\left(\frac{k}\omega+u_{i,j}\right)+O_k(\omega)\right)\\
&=\omega^{\nu_q}\left(\left(\frac{k}\omega-\lambda_1\right)^{l_1}...\left(\frac{k}\omega-\lambda_r\right)^{l_r}+O_k(\omega)\right),\end{align*} 
where $u_{i,j}:=\langle {\bm \Psi^{(q)}_{i}}|\bm{V}_{\rho}  \bm{\Psi}_j^{(q)}\rangle  $, $1\leq i,j \leq \nu_q$ and  $\{\lambda_n\},$ $1\leq n\leq r$ are the the eigenvalues of the matrix $(u_{i,j}).$

Let $k(\omega)$ be a zero of ${\rm det}(M_\omega(k))$. By the previous equalities we will get
$$\left(\frac{k(\omega)}\omega-\lambda_1\right)^{l_1}...\left(\frac{k(\omega)}\omega-\lambda_r\right)^{l_r}=O(\omega),$$
which implies that for some $1\leq n\leq r$ we have $\frac{k(\omega)}\omega\to\lambda_n$ when $\omega\to0$, and  
$$\left(\frac{k(\omega)}\omega-\lambda_n\right)^{l_n}=O(\omega),$$
or equivalently 
$$k(\omega)=\lambda_n \omega+O(\omega^{1+\frac{1}{l_n}}).$$

%Making $x=k/\omega$ we can see that
%\begin{equation}\label{fin}{\rm det}(M_\omega(k))=\omega^{\nu_q}({\rm det}(x+u_{i,j})+O_x(\omega))\end{equation}

Now, notice that the polynomial ${\rm det}(x+ u_{i,j})$ is monic of order $\nu_q$ so it has  $\nu_q$ zeroes (counting multiplicity).
In addition,  taking a bounded region containing  all the zeroes of ${\rm det}(x+ u_{i,j})$ (i.e., the eigenvalues of $(u_{i,j})$)   we can  use  Rouche's theorem and \eqref{fin} to conclude that the number of zeroes of ${\rm det}(M_\omega(k))$ inside the region $\omega D$ is $\nu_q$.

\end{comment}

Let us now prove \eqref{multi}.  Fix $q\in \{1,...,d\}$. Equation \eqref{ME} implies that in variable $k$, the resonances of $H_{\omega V} $ are distributed in ``clusters" around  the points $-\frac{i}2\omega\alpha_j^{(q)}$, $j\in \{1,...,r\}$. Fix $j\in \{1,...,r\}$ and let $C>0$ and $1<\delta<1+1/m_{q,j}$ so that the disk $D_{C\vert \omega \vert^\delta}(-\frac{i}2\omega\alpha_j^{(q)})$ contains all the resonances of the j-th cluster and only them. Set $\Gamma_j := \partial D_{C\vert \omega \vert^\delta}(-\frac{i}2\omega\alpha_j^{(q)})$.
%consider the curve $\gamma_j(t)=-\frac{i}2\omega \alpha_j^{(q)}+e^{it}|\omega|^{\delta}$, $t\in [0,2\pi[$, for some $1<\delta<1+1/m_{q,j},$   inside this curve there are only resonances of the ``cluster $j$'' and it contains all the resonances of this cluster. Moreover it does not contain any other eigenvalue of $E_q$.  
We will  show that 
\begin{equation}\label{29nov21g}
{\rm rank}\oint_{\Gamma_j} \mathcal{R}^{(q)}_{\omega}(k)dk = {\rm rank}\oint_{\Gamma_j} (k\Pi_q +\frac{i}2\omega\Pi_q{V}_\rho\Pi_q)^{-1}\,dk.
\end{equation}

\begin{lem}
%Soient $j\in \{1,...,r\}$ et $1<\delta<1+\dfrac1{m_j}$ fixÈs et on considËre la courbe 
%$$
%\Gamma_j := \{-\omega\alpha_j+ \vert \omega \vert^\delta e^{i\theta}, \theta\in[0,2\pi]\}.
%$$ 
Define 
$$
P_{j,\omega}:=\oint_{\Gamma_j}(k+\omega E_q+w^2 \Pi_q S_\omega(k) \Pi_q)^{-1}dk;\quad \tilde{P}_{j,\omega}:=\oint_{\Gamma_j}(k+\omega E_q+\omega^2\Pi_q S_\omega(-\omega\alpha_j) \Pi_q)^{-1}dk.
$$ 
Then, $\|P_{j,\omega}-\tilde{P}_{j,\omega}\|=o(1)$ when  $|\omega|\to0$. 
\end{lem}
\begin{proof}
First, let $J_q$ be   the  canonical Jordan form of    $E_q=\mathcal{U}J_q\mathcal{U}^{-1}$. Then $$
P_{j,\omega}=\mathcal{U}\oint_{\Gamma_j}(k+\omega J_q+w^2Q_\omega(k))^{-1}dk\,\mathcal{U}^{-1}=\mathcal{U}\oint_{\gamma_j}(\zeta+ J_q+wQ_\omega( \zeta))^{-1}d\zeta\,\mathcal{U}^{-1}
$$
where $k=\omega\zeta$, $Q_\omega(\zeta)=:\mathcal{U}^{-1} \Pi_q S_\omega(\omega \zeta)  \Pi_q\mathcal{U}$ and $\gamma_j$ is the curve $\gamma_j:=\{-\alpha_j+ \vert \omega \vert^{\delta-1} e^{i(\theta - \arg(\omega))}, \theta\in[0,2\pi]\}$.

Now, let $(J_{q,s})_{1\leq s \leq L}$  be the Jordan  blocks of  $J_q$. For each block  $J_{q,s}$ associated with  an eigenvalue   $\alpha_s^{(q)}$ of $E_q$ we have  
$$
(z+J_{q,s})^{-1}=\begin{pmatrix} (z+\alpha_s^{(q)})^{-1}&-\frac{(z+\alpha_s^{(q)})^{-2}}{2!}&\cdots&(-1)^{r_s-1}\frac{(z+\alpha_s^{(q)})^{-r_s}}{r_s!}\\
0&(z+\alpha_s^{(q)})^{-1}&\cdots&(-1)^{r_s-2}\frac{(z+\alpha_s^{(q)})^{-r_s+1}}{(r_s-1)!}\\
\vdots &  \ddots&   \ddots& \vdots\\
0&\cdots&(z+\alpha_s^{(q)})^{-1}&-\frac{(z+\alpha_s^{(q)})^{-2}}{2!}\\
0&\cdots&0&(z+\alpha_s^{(q)})^{-1}
\end{pmatrix}
$$
where $1\leq r_s\leq m_{q,s}$.

Then, to estimate the norm of  $(\zeta+J_q)^{-1}$ when $\zeta\in \gamma_j$, it is  sufficient  to evaluate the function $(\zeta+\alpha_s^{(q)})^{-r}$ for $\zeta=-\alpha_j^{(q)}+ \vert \omega \vert^{\delta-1} e^{i(\theta - \arg(\omega))}$ and for the biggest  of the $r_s\leq m_{q,s}$, $1\leq s\leq  L$. One obtain that for every  $\zeta\in \gamma_j$, 
$$
\vert \zeta+ \alpha_s^{(q)} \vert^{-r_s} = \left\{\begin{array}{lll}
\mathcal{O}(1) \; & \text{if} \; s\neq j\\
\mathcal{O}(|\omega|^{m_{q,j}(1-\delta)}) \; & \text{if} \; s = j.
\end{array}\right.
$$
In consequence 
$$
\|(\zeta+J_q)^{-1}\|\leq C |\omega|^{m_{q,j}(1-\delta)}, \quad \forall \zeta\in \gamma_j.
$$ 
Next,
\begin{equation}\label{1.1}
(\zeta+ J_q+wQ_\omega( \zeta))^{-1}=\big(I+w(\zeta+ J_q)^{-1}Q_\omega( \zeta)\big)^{-1}(\zeta+ J_q)^{-1}
\end{equation}
is well defined for  $\zeta$ on the  curve $\gamma_j$ since   $\delta<1+1/m_j$ implies that  $m_j(1-\delta)+1>0$. So \begin{equation}\label{1.2}\|w(\zeta+ J)^{-1}\|=O(|\omega|^{m_j(1-\delta)+1})=o(1), \quad \vert \omega\vert \to 0.\end{equation} 
Consider the difference 
\begin{equation}\label{9sep22}
P_{j,\omega}-\tilde{P}_{j,\omega}=\oint_{\gamma_j}\omega(\zeta+ J_q+wQ_\omega( \zeta))^{-1}( Q_\omega(- \alpha_j)-Q_\omega( \zeta))(\zeta+ J_q+wQ_\omega( -\alpha_j))^{-1}.
\end{equation}

The function $Q_\omega(\zeta)$ is analytic. Then, there exists $C>0$ (independent of $\omega$)  such that $\|Q_\omega(\zeta)-Q_\omega(-{\alpha_j})\|\leq C |\omega \zeta+\omega {\alpha_j}|\leq C |\omega|^\delta$. Thus, using \eqref{1.1},  \eqref{1.2} et \eqref{9sep22} one obtains 
$$
\|P_{j,\omega}-\tilde{P}_{j,\omega}\|
\leq C |\omega|^{2m_j(1-\delta)+2\delta}=o(1).
$$
\end{proof}

Using \eqref{ME} and  \cite[I Lemma 4.10]{kato} we see that  ${\rm rank} {P}_{j,\omega}={\rm rank} \tilde{P}_{j,\omega}$. 
Finally, from \eqref{10}  one obtain that $Q_\omega( -\alpha_j)$ depends analytically  on  $\omega$. Thus,   using  Kato's perturbation theory  \cite[II 1.19]{kato}
$$
{\rm rank} \tilde{P}_{j,\omega}={\rm rank}\oint_{\gamma_j}(z+ J_q+wQ_\omega( -{\alpha_j}))^{-1}dz={\rm rank}\oint_{\gamma_j}(z+ J)^{-1}dz=m_j, 
$$ which implies \eqref{multi}.

To prove the final assertion in Theorem \ref{MTH1} we notice that 
the only difference is to define the operator $\mathcal{L}_\omega(k) $ by 
$$
\mathcal{L}_\omega(k) =   \left( I + \omega {V}_\rho \mathcal{G}(k)\right)^{-1} {V}_\rho \left(a_{-1} \otimes \pi_q+b_{-1}\otimes \pi_p\right)
$$ 
to get  the analogous equation of \eqref{29nov21b} in this case. Then, Lemma \ref{lem2} can be applied with $\Pi=\Pi_{q,p}$ and \eqref{29jan22b} is also obtained. The rest of the proof is similar.

\noindent
\textbf{Acknowledgements}

\noindent
M. Assal and P. Miranda acknowledge the financial support of the project 042133MR-POSTDOC of the Universidad de Santiago de Chile. P. Miranda was partially supported by the chilean fondecyt grant 1201857.

\bibliographystyle{plain}
\bibliography{biblio}

\end{document}